\newcommand{\Z }{\mathbb{Z}}
\newcommand{\R }{\mathbb{R}}
\begin{document}

\title{ Approximation Algorithms for Generalized MST and TSP in Grid Clusters }

\author{Binay Bhattacharya \inst{1}\and Ante \'Custi\'c \inst{1}
\and Akbar Rafiey \inst{1} \and Arash Rafiey \inst{1,2} \and
Vladyslav Sokol\inst{1} }

\authorrunning{B. Bhattacharya\and A. \'Custi\'c \and A. Rafiey \and A. Rafiey \and
V. Sokol}

\institute{Simon Fraser University, Burnaby, Canada,
\email{binay,acustic,arafiey,arashr,vsokol@sfu.ca}\thanks{supported by
NSERC Canada}
\and
Indiana State University, Terre Haute, IN, USA,\email{arash.rafiey@indstate.edu}
}

\maketitle

\begin{abstract}

We consider a special case of the generalized minimum spanning tree problem (GMST)
and the generalized travelling salesman problem (GTSP) where
we are given a set of points inside the integer grid (in Euclidean plane) where each gride cell is $1 \times 1$. 
In the MST version of the problem, the goal is to find a minimum tree that contains exactly 
one point from each non-empty grid cell (cluster).
Similarly, in the TSP version of the problem, the goal is to find a minimum weight cycle
containing one point from each non-empty grid cell. We 
give a $(1+4\sqrt{2}+\epsilon)$ and $(1.5+8\sqrt{2}+\epsilon)$-approximation algorithm 
for these two problems in the described setting, respectively.

Our motivation is based on the problem posed in \cite{FGS06} for a constant 
approximation algorithm. The authors designed a PTAS for the more 
special case of the GMST where non-empty cells are connected end dense enough.
However, their algorithm heavily relies on this connectivity restriction and is unpractical. 
Our results develop the topic further.

\begin{keywords}
generalized minimum spanning tree, generalized travelling salesman, grid clusters, approximation algorithm.
\end{keywords}
\end{abstract}


\section{Introduction}

The \emph{generalized minimum spanning tree problem} (GMST) is a generalization of the well known \emph{minimum spanning tree problem} (MST). An instance of the GMST is given by an undirected graph $G=(V,E)$ where the vertex set is partitioned into $k$ \emph{clusters} $V_i$, $i=1,\ldots,k$, and a weight $w(e)\in \R^+$ is assigned to every edge $e\in E$. The goal is to find a tree 
with minimum weight containing one vertex from each cluster.

The GMST occurs in
telecommunications network planning, where a network of node
clusters need to be connected via a tree architecture using exactly
one node per cluster \cite{GRS05}. More precisely, local subnetworks must
 be interconnected by a global network
containing a gateway from each subnetwork. For this inter-networking, a point
has to be chosen in each local network as a hub and the hub point must be
connected via transmission links such as optical fiber, see \cite{MLT95}.
Furthermore, the GMST has some applications in design of backbones
 in large communication networks, energy distribution, and agricultural
irrigation \cite{JC10}.

The GMST was first introduced by Myung, Lee and Tcha  in 1995 \cite{MLT95}. Although MST  is polynomially
solvable \cite{GJ79}, it was shown in \cite{MLT95} that the
GMST is strongly NP-hard and there is no constant
factor approximation algorithm, unless P=NP. However, several heuristic algorithms have been suggested for the GMST, see \cite{GRS05,JC10,OCL08,P02}. Furthermore, Pop, Still and Kern \cite{PSK05} used an LP-relaxation to develop 
a $2\rho-$approximation algorithm for the GMST where the
size of every cluster is bounded by $\rho$.

In \cite{FGS06}, Feremans, Grigoriev and Sitters consider the 
\textit{geometric generalized minimum spanning tree problem in grid clusters}, GGMST for short. In this special case of the GMST,  a complete graph $G=(V,E)$ is given where the set of vertices $V$ correspond to a set of points in the planar integer grid. Every non-empty $1\times 1$ cell of the grid forms a cluster. The weight of the edge between two vertices is given by their Euclidean distance. Fig.~\ref{fig:instance} depicts one instance of the GGMST.
\begin{figure}[h]
	\vspace{-10pt}
	\centering
	\includegraphics[scale=1]{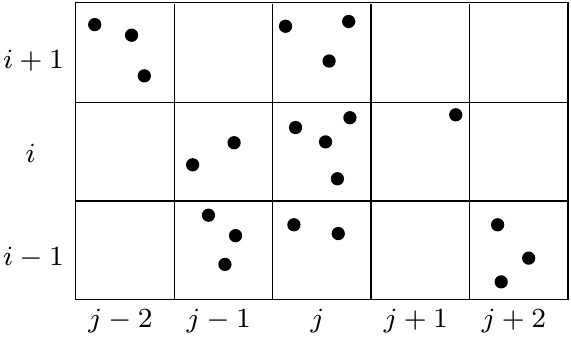}
	\vspace{-5pt}
	\caption{An GGMST instance with $n=21$ points and $N+1=8$ non-empty cells, which are connected and fit into a $3\times 5$ sub-grid}
	\label{fig:instance}
	\vspace{-10pt}
\end{figure}
We say that two grid cells are connected if they share a side or a corner. Furthermore, we say that a set of grid cells is connected if they form one connected component. 
The authors in \cite{FGS06} show that the GGMST
is strongly NP-hard, even if we restrict to instances in which non-empty grid cells are connected and each grid cell contains at most two points.
Furthermore, they designed a dynamic programming algorithm that solves in $\mathcal{O}(l\rho^{6k}2^{34k^2}k^2)$ time the GGMST for which the set of non-empty grid cells is connected and fits into $k\times l$ sub-grid. (Note that the algorithm is polynomial if $k$ is bounded.) 
Moreover, the authors used this algorithm to develop a polynomial time
approximation scheme (PTAS) for the GGMST for which non-empty cells are connected and the number of non-empty cells is superlinear in $k$ and $l$.
The GGMST instances are often used to test heuristics for the GMST which, in light of the results in \cite{FGS06}, is not adequate.
The objective of this paper is to develop this topic further and to design a
simple approximation algorithms for the GGMST and of its variants without restricting only to connected and dense instances.

Analogously as the GMST and the GGMST, the \emph{generalized travelling salesman problem} (GTSP) and the \emph{geometric generalized travelling salesman problem in grid clusters} (GGTSP) can be defined. The GTSP was introduced by Henry-Labordere \cite{L69} and is also known in the literature as \textit{set TSP}, \textit{group TSP} or \textit{One-of-a-Set TSP}. This problem has many applications, 
including airplane routing, computer file sequencing, and postal delivery, see \cite{B83,L92,LAS96}.
Elbassioni, Fishkin, Mustafa and Sitters \cite{EFMS05} considered the GTSP in which non-empty clusters (i.e.\@ regions) are disjoint $\alpha$-fat objects with possibly varying size. In this setting they obtained a $(9.1\alpha+1)$-approximation algorithm. They also give the first $\mathcal{O}(1)$-approximation algorithm for the problem with intersecting clusters (regions). Note that in the GGTSP, fatness of each cluster is $4$ (each cluster is a square).

As a special case of the GTSP we can look at each geometric region as an infinite set of points. This problem, called the \textit{TSP with neighbourhood}, was introduced by Arkin and Hassin \cite{AH94}. In the same paper they present constant factor approximation algorithm for two cases in which the regions are translates of disjoint convex polygons, and for disjoint unit disks. For the general problem Mata and Mitchell \cite{MM95} and later on Gudmundsson and Levcopoulos \cite{GL00}, gave an $\mathcal{O}(\log n)$-approximation algorithm.
For intersecting unit disks an $\mathcal{O}(1)$-approximation algorithm is given in \cite{DM01}. Safra and Schwartz \cite{SS03} show that it is NP-hard to approximate the TSP with neighbourhood within $(2-\epsilon)$. In this context, it is natural to consider the GTSP in which points are sitting inside geometric objects such as the integer grid.

{\bf Notation.\@} We will usually refer to vertices as points. Throughout  this paper, the number of points ($|V|$) will be denoted by $n$. Furthermore, $N$ denotes the number of edges in every feasible solution (tree) of the GGMST, i.e.\@ $N$ is the number of non-empty cells minus $1$. 
The edge between two points $u$ and $v$ will be denoted by $e_{u,v}$. We naturally extend the notation for the weight to sets of edges and graphs, i.g.\@ the weight of a tree $T$ is denoted by $w(T)=\sum_{e\in T}w(e)$, where $e\in T$ means that $e$ is an edge of $T$. 
We assume that every point is in just one cell, i.e.\@ points on the cell borders are assigned to only one neighbouring cell by any rule. An optimal solution of the GGMST will be denoted by $T_{opt}$ throughout this paper.

{\bf Our results and organization of the paper.\@}
The main result of this paper is a $(1+4\sqrt{2}+\epsilon)$-approximation algorithm for the GGMST. We do not assume any restrictions on connectivity, density or cardinality of non-empty cells. The algorithm is presented and analyzed in Section~\ref{sec:alg}. A lower bound for the weight of an optimal solution in terms of $N$ is used to prove the approximation ratio of the algorithm. Section~\ref{sec:lb} is devoted to proving this lower bound. Lastly, in Section~\ref{sec:tsp} we use our GGMST algorithm to develop an approximation algorithms for the GGTSP.


\section{The GGMST Approximation Algorithm}\label{sec:alg}

In this section we present a ($1+4\sqrt{2}+\epsilon$)-approximation algorithm (Algorithm~\ref{algEpsilon}) for the GGMST. Main part of the algorithm is Algorithm~\ref{alg5} which we describe next.

\begin{algorithm}[h]
$T\leftarrow$ solution of the MST problem on non-empty cells (where the distance\\ \Indp between a pair of cells is the length of the shortest edge between them)\;\Indm 
$G\leftarrow$ the graph consisting of the set of edges (and points) that correspond to the\\ \Indp edges in $T$\;\Indm
\For{\emph{all cells} $C$ \emph{that contain more than one point from} $G$}{
	$C_G\leftarrow$ the set of points from $G$ that are in $C$\;
	$p\leftarrow$ point from $C$ that is a median for $C_G$\;
	Replace $C_G$ by $p$, i.e.\@ reconnect to $p$ all edges of $G$ that enter $C$\;
}
\Return $G$\;
 \caption{$\left(1+4\sqrt{2}+\frac{2\sqrt{2}}{w(T_{opt})}\right)$-approximation alg.\@ for the GGMST}
\label{alg5}
\end{algorithm}

Algorithm~\ref{alg5} is divided into two parts; in the first part we solve an MST instance defined as follows: non-empty cells play the role of vertices, and the weight of the edge between two cells $C_1,C_2$ is the smallest weight edge $e_{p_1,p_2}$ where $p_1\in C_1$ and $p_2\in C_2$. Let $T$ be an optimal tree of such MST instance, and let graph $G$ be the set of edges (with its endpoints) of the original GGMST instance that correspond to the edges of $T$. 
Note that $G$ has $N$ edges and spans all non-empty cells but it can have multiple points in some cells. In the second part of the Algorithm~\ref{alg5} (i.e.\@ the {\bf for} loop), we modify $G$ to obtain the GGMST feasibility, by iteratively replacing multiple cell points by a single point $p$. We choose point $p$ to be the one that has the minimum sum of distances to other points of $G$ that are in the corresponding cell.

Next we present an upper bound for solutions obtained by Algorithm~\ref{alg5} in terms of the number of edges $N$.

\begin{theorem}\label{thm3}
	Algorithm~\ref{alg5} produces a feasible solution $T_A$ with $w(T_A)\leq w(T_{opt})+\sqrt{2}N-\sqrt{2}$, where $N$ is the number of edges of $T_A$.
\end{theorem}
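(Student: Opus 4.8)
The plan is to bound $w(T_A)$ in two stages: first relate the cell-MST weight $w(T)$ to $w(T_{opt})$, and then control the weight increase caused by the median-merging performed in the \textbf{for} loop.

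First I would establish $w(G)=w(T)\le w(T_{opt})$. The equality is immediate, since $G$ consists precisely of the edges realizing the cell-to-cell distances along $T$, so $w(G)=\sum_{e\in T}w(e)=w(T)$. For the inequality, I would use that $T_{opt}$, being a feasible GGMST solution, picks exactly one point per cell and is a tree on these $N+1$ points; contracting each cell turns $T_{opt}$ into a spanning tree of the cell graph. Each edge of $T_{opt}$ joining points of cells $C_1,C_2$ has Euclidean length at least the cell distance $d(C_1,C_2)$ used to define the MST instance, so this contracted spanning tree has weight at most $w(T_{opt})$. As $T$ is a minimum spanning tree of the cell graph, $w(T)\le w(T_{opt})$.

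Second, I would bound the increase in the \textbf{for} loop by a charging argument on displacements. A useful structural remark is that every edge of $G$ joins points in two \emph{distinct} cells (its edges come from the cell-tree $T$), so after merging there are no self-loops, and since $T$ has a single edge between any two cells there are no parallel edges either; hence the returned $T_A$ has one point per cell, $N$ edges, and is connected, i.e.\ it is a feasible spanning tree. For each point $q$ of $G$ in cell $C$, let $\delta(q)=\|q-p_C\|$ be its distance to the representative $p_C$ chosen for $C$, so $\delta(p_C)=0$, and $\delta(q)\le\sqrt2$ since $q$ and $p_C$ lie in the same unit cell. Because both endpoints of an edge $e_{u,v}$ may be relocated, two applications of the triangle inequality show its new length exceeds $\|u-v\|$ by at most $\delta(u)+\delta(v)$, and summing over all edges gives
\[
w(T_A)-w(G)\le \sum_{e_{u,v}\in G}\bigl(\delta(u)+\delta(v)\bigr)=\sum_{q}\deg_G(q)\,\delta(q).
\]

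Finally I would estimate this sum cell by cell. Using $\delta(p_C)=0$ and $\delta(q)\le\sqrt2$, the contribution of cell $C$ is at most $\sqrt2\bigl(\sum_{q\in C_G}\deg_G(q)-\deg_G(p_C)\bigr)$. Summing over all $N+1$ cells and using the handshake identity $\sum_q\deg_G(q)=2N$ together with $\deg_G(p_C)\ge 1$ for every cell (each representative is itself a point of $G$), the total is at most $\sqrt2\,(2N-(N+1))=\sqrt2\,(N-1)$; combined with $w(G)\le w(T_{opt})$ this gives $w(T_A)\le w(T_{opt})+\sqrt2\,N-\sqrt2$. The main obstacle is setting up this charging \emph{per point} via degrees rather than naively per edge: the crude per-edge bound of $2\sqrt2$ would only yield $2\sqrt2\,N$, and the needed saving comes exactly from the fact that each cell's representative absorbs its own incident edges at zero cost, removing one unit of degree per cell from the count.
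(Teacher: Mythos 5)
Your overall strategy matches the paper's proof in its skeleton (show $w(G_0)\le w(T_{opt})$ by contracting cells, then charge the merging cost at roughly $\sqrt2$ per edge-endpoint while saving one $\sqrt2$ per cell), and your degree-weighted charging is clean; however, there is one genuine gap: the claim that ``each representative is itself a point of $G$'', i.e.\ that $\deg_G(p_C)\ge 1$ for every cell. Algorithm~\ref{alg5} chooses $p$ as a \emph{point from $C$} that is a median for $C_G$, and the paper's own argument stresses that points from $C_G$ are merely \emph{candidates} for $p$: the minimization is over all instance points inside the cell, including points that are not vertices of $G$. If for some processed cell the median lies outside $C_G$, then $\deg_G(p_C)=0$, your per-cell bound weakens from $\sqrt2\bigl(\sum_{q\in C_G}\deg_G(q)-1\bigr)$ to $\sqrt2\sum_{q\in C_G}\deg_G(q)$, and the summation only yields $w(T_A)\le w(T_{opt})+\sqrt2(2N-k)$, where $k$ is the number of cells whose representative happens to lie in $C_G$. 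This can be strictly worse than the claimed $w(T_{opt})+\sqrt2 N-\sqrt2$, and nothing in your argument rules the situation out; note that your proof never actually uses the defining (median) property of $p_C$, only that it lies in the same unit cell, which is a sign that an assumption of the algorithm has been dropped.

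The gap is fixable, and the fix is precisely where the median property must enter. For a processed cell $C$ write
\[
\sum_{q\in C_G}\deg_G(q)\,\delta(q)=\sum_{q\in C_G}\delta(q)+\sum_{q\in C_G}\bigl(\deg_G(q)-1\bigr)\delta(q).
\]
Since every point of $C_G$ is a candidate for the median, comparing $p_C$ against any fixed $c_0\in C_G$ gives $\sum_{q\in C_G}\delta(q)\le\sum_{q\in C_G}\|q-c_0\|\le\sqrt2\,(|C_G|-1)$, because the term for $q=c_0$ vanishes and the remaining $|C_G|-1$ terms are at most $\sqrt2$; the second sum is at most $\sqrt2\bigl(\sum_{q\in C_G}\deg_G(q)-|C_G|\bigr)$ termwise. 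Adding the two recovers your bound $\sqrt2\bigl(\sum_{q\in C_G}\deg_G(q)-1\bigr)$ for every processed cell, wherever $p_C$ lies, and the rest of your summation (handshake identity $\sum_q\deg_G(q)=2N$ over the $N+1$ cells, unprocessed cells contributing $0$) goes through unchanged. With this patch your write-up is, in one respect, more careful than the paper's own: the paper's half-edge averaging tacitly identifies $|C_G|$ with the number of edges of $G$ entering $C$, i.e.\ it treats every point of $C_G$ as if it had exactly one incident edge, whereas your degree-weighted charging handles multiplicities explicitly.
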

\begin{proof}
	Denote by $G_0$ the non-feasible graph obtained in the first part of the algorithm, i.e.\@ the first version of graph $G$. Then the weight of the solution $T_A$ obtained by the algorithm is equal to $w(G_0)+ext$, where $ext$ is the amount by which we increase (extend) the weight of $G_0$ in the second part of the algorithm. Note that $w(G_0)\leq w(T_{opt})$, as $G_0$ is an optimal solution of the problem for which $T_{opt}$ is a feasible solution (find a minimum weight set of edges that spans all non-empty cells, with all GGMST edges being allowed). In the rest of the proof we will bound the value of $ext$.

In every run of the \textbf{for} loop we replace the set of points $C_G$ with $p$. In doing so, every edge $e_{q,c}$, $c\in C_G$ from $G$, is replaced by $e_{q,p}$. From the triangle inequality we get that $w(e_{q,p})-w(e_{q,c})\leq w(e_{c,p})$. Hence, the increase (extension) of the weight of $G$ in every run of the \textbf{for} loop is less or equal than $\sum_{c\in C_G}w(e_{c,p})$. Instead of bounding such absolute values, we will bound its average per edge adjacent to the corresponding cell.  
More precisely, we will calculate an average extension per \emph{half-edge} assigned to the corresponding cell. Namely, every edge will be extended at most two times, once on each endpoint, so we can look at each extension as an extension of a {half-edge}. Furthermore, note that edges that contain leafs will be extended only on one side. We will use this fact to assign half-edges that contain leafs to other cells to lower their average half-edge extension. 
To every cell $C$, we will assign $|C_G|-2$ leaf half-edges. Intuitively, we can do this because every node $v$ of a tree \emph{generates} $\deg(v)-2$ leafs. Formally, it follows from the following well known equality:
\begin{equation}\label{eq3}
|V_1|=2+\sum_{i\geq2}|V_i|(i-2),
\end{equation}
where $V_i=\{v\in V\colon \deg(v)=i\}$, and $V$ is the set of vertices of a graph.

Then for a cell $C$ the average extension per assigned half-edges is bounded above by 
\begin{equation}\label{eq4}
	\frac{\sum_{c\in C_G}w(e_{c,p})}{|C_G|+(|C_G|-2)}.
\end{equation}
Note that the maximum distance between two cell points is $\sqrt{2}$. Since points from $C_G$ are candidates for $p$, it follows that $\sum_{c\in C_G}w(e_{c,p})\leq \sqrt{2}(|C_G|-1)$. Hence, \eqref{eq4} is bounded above by
\[
\frac{\sqrt{2}(|C_G|-1)}{2|C_G|-2}=\frac{\sqrt{2}}{2}.
\]
Hence, in average, every half-edge (except 2 leaf half-edges, see \eqref{eq3}) is extended by at most $\sqrt{2}/2$. Note that this average bound is a constant, i.e.\@ does not depend on $C$. Now $ext$ can be bounded by 
\begin{equation}
ext\leq \frac{\sqrt{2}}{2}(2N-2)=\sqrt{2}N-\sqrt{2}.
\end{equation}
Finally, we can bound the solution $T_A$ of the algorithm by
\[
w(T_A)\leq w(G_0)+ext\leq w(T_{opt})+\sqrt{2}N-\sqrt{2}.
\]
\qed\end{proof}

The following theorem gives a lower bound for the optimal solution in terms of the number of edges $N$. Section~\ref{sec:lb} is dedicated to proving the theorem.

\begin{theorem}\label{thm:lb}
If $T_{opt}$ is an optimal solution of the GGMST on $N+1$ non-empty cells, then $N\leq  4w(T_{opt})+3$.
\end{theorem}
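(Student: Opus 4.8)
The plan is to bound the combinatorial quantity $N+1$ (the number of non-empty cells) by the area of a region built from $T_{opt}$, thereby converting a counting problem into a geometric estimate. Regard the optimal solution as a geometric tree $\Gamma = T_{opt}$ in the plane, i.e.\@ a union of straight segments of total length $W := w(T_{opt})$, whose vertices are the chosen points $p_0,\dots,p_N$, one in each non-empty cell. The first observation is that each non-empty cell is swallowed by the $[-1,1]^2$-thickening of its representative point: if $p_i$ lies in the unit cell $C_i$, then every point of $C_i$ differs from $p_i$ by a vector in $[-1,1]^2$, so $C_i \subseteq p_i \oplus [-1,1]^2 \subseteq \Gamma \oplus [-1,1]^2$. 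Since the $N+1$ cells are pairwise interior-disjoint unit squares, their union has area exactly $N+1$, and hence
\[
N+1 \;\le\; \mathrm{area}\bigl(\Gamma \oplus [-1,1]^2\bigr).
\]

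It then remains to show $\mathrm{area}\bigl(\Gamma \oplus [-1,1]^2\bigr) \le 4W + 4$. I would build $\Gamma$ one edge at a time in a DFS (spanning) order $e_1,\dots,e_N$, so that each $e_t$ shares an endpoint $v_t$ with the already-assembled part $\Gamma_{t-1}$, starting from a single vertex $\Gamma_0$ whose thickening has area $4$. For a single segment $e$ of length $\ell$ one computes $\mathrm{area}\bigl(e\oplus[-1,1]^2\bigr)=4+h\,\ell$, where $h\in[2,2\sqrt2]$ is the width of the square measured perpendicular to $e$; in particular $\mathrm{area}\bigl(e\oplus[-1,1]^2\bigr)\le 4+4\ell$. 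Because the shared vertex forces $v_t\oplus[-1,1]^2 \subseteq \bigl(\Gamma_{t-1}\oplus[-1,1]^2\bigr)\cap\bigl(e_t\oplus[-1,1]^2\bigr)$, adding $e_t$ increases the area by at most $\mathrm{area}\bigl(e_t\oplus[-1,1]^2\bigr)-4 \le 4\ell_t$. Telescoping over all edges gives $\mathrm{area}\bigl(\Gamma\oplus[-1,1]^2\bigr)\le 4 + 4\sum_t \ell_t = 4+4W$, whence $N \le 4W+3$ as claimed. Using the sharp width $h\le 2\sqrt2$ in place of the crude bound $4$ even yields the slightly stronger $N\le 2\sqrt2\,W+3$.

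The main obstacle is precisely this last accounting. A naive union bound, $\mathrm{area}\bigl(\Gamma\oplus[-1,1]^2\bigr)\le \sum_t \mathrm{area}\bigl(e_t\oplus[-1,1]^2\bigr) \le 4N+4W$, is useless, since it pays the base area $4$ once per edge and reintroduces an $N$ on the right. The point of the incremental, connectivity-driven ordering is to charge that base area only once, exploiting the overlap forced at each shared vertex. This constant $4$ is also the structural source of the bound: at most four distinct cells meet at any grid vertex (equivalently, any axis-aligned unit square meets at most four cells), so four is the largest number of cells that a vanishingly small piece of $\Gamma$ can ever certify, and thickening by $[-1,1]^2$ is what converts that local count into the global factor in front of $W$.
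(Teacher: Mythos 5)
Your proof is correct, and it takes a genuinely different route from the paper's. The paper proves Theorem~\ref{thm:lb} by induction on $N$: it first derives lower bounds on the weight of subtrees with four, seven, eight and nine edges (Lemmas~\ref{lm:4}--\ref{lm:9}, via grid-diameter arguments and Fermat-point/Steiner-tree computations), and then runs a lengthy exhaustive case analysis on the rooted structure of $T_{opt}$ to exhibit, in every configuration, a removable subtree whose weight is at least a quarter of its edge count and whose removal keeps the tree connected, so that the induction hypothesis applies to the remainder. You replace all of this with a single packing estimate: the $N+1$ interior-disjoint unit cells lie inside the Minkowski sum $\Gamma\oplus[-1,1]^2$, whose area you bound by $4+hW$ with $h\le 2\sqrt{2}$ by inserting edges in DFS order and using the forced overlap $v_t\oplus[-1,1]^2$ at each shared endpoint to pay the base area $4$ only once; inclusion--exclusion makes the incremental step rigorous. (Both arguments, incidentally, use nothing about optimality of $T_{opt}$, only feasibility, which matches the phrasing in the paper's conclusion.) Your approach buys a great deal: it is a few lines instead of several pages, it avoids the case analysis entirely, it works verbatim for any connected feasible subgraph, and, as you note, the sharp perpendicular width $2\sqrt{2}$ gives $N\le 2\sqrt{2}\,w(T_{opt})+3$, strictly stronger than the paper's $N\le 4\,w(T_{opt})+3$. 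Fed into Theorem~\ref{thm3}, this yields $w(T_A)\le 5\,w(T_{opt})+2\sqrt{2}$ and hence a $(5+\epsilon)$-approximation in place of $(1+4\sqrt{2}+\epsilon)\approx 6.66$, partially answering the open problem raised in the paper's conclusion about tightening this lower bound (a $k\times k$ block of cells with points clustered in groups of four at interior grid corners gives $N=2\,w(T_{opt})+3$, so the optimal constant lies between $2$ and $2\sqrt{2}$). The only thing the paper's route offers in exchange is that it is purely combinatorial/metric and yields the small-subtree lemmas along the way, but since those are used nowhere else, your argument is simply the better proof of this statement.
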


Now from Theorem~\ref{thm3} and Theorem~\ref{thm:lb} the following approximation bound for Algorithm~\ref{alg5} follows.
\begin{corollary}\label{cor:alg1}
	Algorithm~\ref{alg5} produces a feasible solution $T_A$ of the GGMST such that  $w(T_A)\leq  (1+4\sqrt{2})w(T_{opt})+2\sqrt{2}$.
\end{corollary}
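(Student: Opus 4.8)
The plan is to simply combine the two estimates already in hand, since the corollary is a direct consequence of Theorem~\ref{thm3} and Theorem~\ref{thm:lb}. First I would recall the upper bound produced by the algorithm: Theorem~\ref{thm3} guarantees that the returned solution $T_A$ satisfies $w(T_A)\leq w(T_{opt})+\sqrt{2}N-\sqrt{2}$, where $N$ is the number of edges of $T_A$. The key structural observation is that $N$ enters this inequality with a \emph{positive} coefficient $\sqrt{2}$, so replacing $N$ by any valid upper bound keeps the inequality intact and still yields an upper bound on $w(T_A)$.

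Next I would invoke the lower bound of Theorem~\ref{thm:lb}, namely $N\leq 4w(T_{opt})+3$, and substitute it into the bound above. This gives
\[
w(T_A)\leq w(T_{opt})+\sqrt{2}\bigl(4w(T_{opt})+3\bigr)-\sqrt{2}.
\]
The rest is routine algebra: the two $w(T_{opt})$ contributions merge into $(1+4\sqrt{2})\,w(T_{opt})$, and the constant terms collapse as $3\sqrt{2}-\sqrt{2}=2\sqrt{2}$, producing exactly $w(T_A)\leq (1+4\sqrt{2})\,w(T_{opt})+2\sqrt{2}$, as claimed.

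There is essentially no hard step here, as both ingredients are established earlier. The only point that genuinely deserves a moment's care is the \emph{direction} of the substitution: because we seek an upper bound on $w(T_A)$ and $N$ appears with a positive sign, we must feed in the \emph{upper} bound on $N$ from Theorem~\ref{thm:lb} rather than any lower estimate. Theorem~\ref{thm:lb} is phrased precisely in that direction, so once this sign check is noted the corollary follows immediately.
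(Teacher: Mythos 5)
Your proposal is correct and matches the paper's (implicit) derivation exactly: substitute the bound $N\leq 4w(T_{opt})+3$ from Theorem~\ref{thm:lb} into the inequality $w(T_A)\leq w(T_{opt})+\sqrt{2}N-\sqrt{2}$ from Theorem~\ref{thm3}, and simplify. The sign-direction check you note is the right (and only) subtlety, and your algebra is correct.
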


Note that, due to the constant $2\sqrt{2}$, Corollary~\ref{cor:alg1} does not gives us a constant approximation ratio for Algorithm~\ref{alg5}. Namely, the approximation ratio that we get is equal to $1+4\sqrt{2}+\frac{2\sqrt{2}}{w(T_{opt})}$. 
Next we focus on improving Algorithm~\ref{alg5} so that $\frac{2\sqrt{2}}{w(T_{opt})}$ is replaced by arbitrary small $\epsilon>0$. Note that the optimal solution weight does not necessarily increase with the increase of the number of points $n$, namely all points can be in the same cells. Hence we cannot use the standard approach. However, the following two facts will do the trick. First, note that the  weight of the GGMST optimal solution increases as the number of non-empty cells increases. Second, given a spanning tree structure of non-empty cells $T$, we can in polynomial time find the minimum weight GGMST feasible solution $T'$ with the same tree structure as $T$ (i.e.\@ there is an edge in $T'$ between two cells if and only if these two cells are adjacent in $T$).  Next we design a dynamic programming algorithm to find $T'$ (see  Algorithm~\ref{algDP}).

Given an GGMST instance, let $T$ be a spanning tree of the complete graph where the set of vertices correspond to the set of non-empty cells. Denote by $X_i$ the set of points inside cell $C_i$. We observe $T$ as a rooted tree with $C_r$ as its root. If $C_i$ is a leaf of $T$ then the weight $W(z)$ of each point $z$ in set $X_i$ is set to zero. If $C_i$ is not a leaf then $T$ has some children $C_{i_1},\ldots,C_{i_k}$ and the weight for points inside sets $X_{i_1},\ldots,X_{i_k}$ has already been computed. Then for each point $p$ in cell $C_i$ (set $X_i$) we compute:
\[
W(p)=\sum_{j=1}^k\min_{q\in X_{i_j}}\{W(q)+w(e_{p,q})\}
\]
Algorithm~\ref{algDP} computes $W(p)$ for all $p\in C_r$. Note that it is easy to adapt Algorithm~\ref{algDP} to store selected points at each step.

\begin{algorithm}[h]
\KwData{A spanning tree $T$ of non-empty cells}
\KwResult{An optimal weight of the GGMST tree with the same structure as $T$}
Choose an arbitrary cell $C_r$ as the root of $T$\;
\For{\emph{each leaf} $C_i$ \emph{of} $T$}{
	\For{\emph{each} $p\in X_i$ }{
		$W(p)=0$\;
	}
}
$CurrentLevel$ = height of $T$\;
\While{CurrentLevel $\geq$ \emph{root level}}{
	\For{\emph{each node} $C_i$ \emph{of} CurrentLevel}{
		Let $C_{i_1},\ldots,C_{i_k}$ be children of $C_i$ in $T$\;
		\For{\emph{each} $p\in X_i$}{
			$W(p)=\sum_{j=1}^k\min_{q\in X_{i_j}}\{W(q)+w(e_{p,q})\}$\;
		}
	}
	$CurrentLevel = CurrentLevel-1$\;
}
\Return $\min_{p\in X_r}W(p)$\;
 \caption{Optimal GGMST solution for a given spanning tree of cells}
\label{algDP}
\end{algorithm}

Now we have all ingredients  to design a $(1+4\sqrt{2}+\epsilon)$-approximation algorithm, see Algorithm~\ref{algEpsilon}. Note that $1+4\sqrt{2}$ is approximately equal to $6.66$.

\begin{algorithm}[h]
\eIf{$N\leq 15$ \bf{or} $N\leq 10\sqrt{2}/\epsilon$}{
	Output minimum weight solution obtained by Algorithm~\ref{algDP} on all spanning trees of non-empty cells\;
}{
	Run Algorithm~\ref{alg5}\;
}
 \caption{$(1+4\sqrt{2}+\epsilon)$-approximation algorithm for the GGMST}
\label{algEpsilon}
\end{algorithm}

\begin{theorem}\label{thmEpsilon}
	For any $\epsilon>0$, Algorithm~\ref{algEpsilon} is a $(1+4\sqrt{2}+\epsilon)$-approximation algorithm for the GGMST.
\end{theorem}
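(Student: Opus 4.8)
The plan is to analyze the two branches of Algorithm~\ref{algEpsilon} separately and show that each meets the claimed ratio, while both run in time polynomial in $n$ for any fixed $\epsilon>0$.

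First I would treat the branch where $N\leq 15$ or $N\leq 10\sqrt{2}/\epsilon$, in which the algorithm enumerates all spanning trees of the non-empty cells and applies Algorithm~\ref{algDP} to each. The key observation is that $T_{opt}$ induces some spanning tree structure on the non-empty cells, and this structure is one of the trees enumerated. Since Algorithm~\ref{algDP} returns the minimum-weight feasible point assignment for a prescribed tree structure, on that particular tree it returns a value at most $w(T_{opt})$; taking the minimum over all trees therefore outputs a solution of weight exactly $w(T_{opt})$. Hence this branch is optimal, well within the factor $1+4\sqrt{2}+\epsilon$. For the running time, note that the branch is entered only when $N\leq\max(15,10\sqrt{2}/\epsilon)$, a constant for fixed $\epsilon$; by Cayley's formula there are $(N+1)^{N-1}$ spanning trees, again a constant, and each is processed by Algorithm~\ref{algDP} in time polynomial in $n$.

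Next I would treat the remaining branch, where $N>15$ and $N>10\sqrt{2}/\epsilon$, in which the algorithm runs Algorithm~\ref{alg5}. By the discussion following Corollary~\ref{cor:alg1}, this produces $T_A$ with ratio $1+4\sqrt{2}+\tfrac{2\sqrt{2}}{w(T_{opt})}$, so it suffices to show $\tfrac{2\sqrt{2}}{w(T_{opt})}\leq\epsilon$. Theorem~\ref{thm:lb} gives $w(T_{opt})\geq (N-3)/4$, whence $\tfrac{2\sqrt{2}}{w(T_{opt})}\leq \tfrac{8\sqrt{2}}{N-3}$, and it remains to verify $N\geq \tfrac{8\sqrt{2}}{\epsilon}+3$.

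Establishing this last inequality from the branch condition is the step that needs the most care, and it is precisely where the disjunction of the two thresholds $15$ and $10\sqrt{2}/\epsilon$ does its work. I would split on whether $\epsilon\leq 2\sqrt{2}/3$. When $\epsilon\leq 2\sqrt{2}/3$ one has $2\sqrt{2}/\epsilon\geq 3$, so $N>10\sqrt{2}/\epsilon = 8\sqrt{2}/\epsilon + 2\sqrt{2}/\epsilon \geq 8\sqrt{2}/\epsilon + 3$, and the required bound follows from the second threshold. When $\epsilon>2\sqrt{2}/3$ one has $8\sqrt{2}/\epsilon<12$, so $8\sqrt{2}/\epsilon+3<15<N$, and the bound follows instead from the first threshold $N>15$. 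In either case $N\geq \tfrac{8\sqrt{2}}{\epsilon}+3$, giving $\tfrac{2\sqrt{2}}{w(T_{opt})}\leq\epsilon$ and hence the ratio $1+4\sqrt{2}+\epsilon$ for this branch. Combining the two branches completes the proof. The main obstacle is thus not any deep argument but the careful threshold bookkeeping above, together with confirming that the enumeration branch stays polynomial because its trigger bounds $N$ by a constant for fixed $\epsilon$.
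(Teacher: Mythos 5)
Your proof is correct and follows essentially the same route as the paper: the enumeration branch is argued optimal and polynomial, and the other branch combines Corollary~\ref{cor:alg1} with Theorem~\ref{thm:lb} and the two threshold conditions to absorb the additive $2\sqrt{2}$ into an $\epsilon\, w(T_{opt})$ term. The only difference is arithmetic bookkeeping: the paper avoids your case split on $\epsilon$ by observing that $N>15$ together with Theorem~\ref{thm:lb} gives $1\leq 5w(T_{opt})/N$, hence $2\sqrt{2}\leq \frac{10\sqrt{2}}{N}w(T_{opt})\leq \epsilon\, w(T_{opt})$ directly from $N>10\sqrt{2}/\epsilon$.
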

\begin{proof}
	If $N\leq 15$ or $N\leq 10\sqrt{2}/\epsilon$, then we can enumerate all spanning trees on $N+1$ non-empty cells, and apply Algorithm~\ref{algDP} on each of them. That will give us an optimal solution in polynomial time.

Assume $N>15$ and $N> 10\sqrt{2}/\epsilon$. By Corollary~\ref{cor:alg1} it follows that Algorithm~\ref{alg5} will produce a solution $T_A$ such that
\begin{equation}\label{eq2}
w(T_A)\leq \left(1+4\sqrt{2}\right)w(T_{opt})+2\sqrt{2}.
\end{equation}
From Theorem~\ref{thm:lb} and $N>15$ it follows that $1\leq 5w(T_{opt})/N$. Applying that on the rightmost element of inequality \eqref{eq2} we get
\begin{align*}
w(T_A)&\leq \left(1+4\sqrt{2}\right)w(T_{opt})+\frac{10\sqrt{2}}{N}w(T_{opt}),\\
 &\leq \left(1+4\sqrt{2}+\frac{10\sqrt{2}}{N}\right)w(T_{opt}).
\end{align*}
Now from $N> 10\sqrt{2}/\epsilon$ it follows that
\[
w(T_A)\leq \left(1+4\sqrt{2}+\epsilon\right)w(T_{opt}),
\]
which proves the theorem.
\qed\end{proof}


\section{The Lower Bound Proof}\label{sec:lb}

This section is entirely devoted to proving Theorem~\ref{thm:lb} which gives us a lower bound on the weight of an optimal solution. The lower bound is expressed in terms of the number of edges $N$. 

Throughout this section we identify $1\times 1$ grid cell with its coordinates $(i,j)$, where $i,j\in\Z$ is the row and the column of the cell inside the infinite integer grid. For example, in Fig.~\ref{fig:instance}, cell $(i,j+1)$ contains one point which is near its upper right corner.

We start by proving lower bounds for trees of small size. 

\begin{lemma}\label{lm:4}
The weight of any subtree of $T_{opt}$ with four edges is at least $1$.
\end{lemma}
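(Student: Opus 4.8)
The plan is to exploit the fact that a subtree with four edges has exactly five vertices, and that by GGMST feasibility these five points lie in five \emph{distinct} grid cells. The whole argument then reduces to a geometric lower bound: five points in five distinct unit cells, connected by any tree, must have total edge weight at least $1$. The engine driving this is a simple pigeonhole observation on the bounding box of the five cells, combined with the triangle inequality applied to a single path inside the tree.

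First I would argue that the five occupied cells cannot all fit inside a $2\times 2$ block of cells, since such a block contains only four cells. Consequently, if $W$ and $R$ denote the number of consecutive column-indices and row-indices spanned by the bounding box of the five cells, then $W\cdot R\geq 5$, which forces $\max(W,R)\geq 3$. Without loss of generality assume the horizontal direction realizes this, so the leftmost occupied column $c_{\min}$ and the rightmost occupied column $c_{\max}$ satisfy $c_{\max}-c_{\min}\geq 2$.

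Next I would extract two extremal points. Pick a point $p$ lying in a cell of column $c_{\min}$ and a point $q$ lying in a cell of column $c_{\max}$. Since the cell in column $c_{\min}$ occupies the strip $[c_{\min},c_{\min}+1]$ in the $x$-coordinate, we have $x_p\leq c_{\min}+1$; similarly $x_q\geq c_{\max}\geq c_{\min}+2$. Hence $x_q-x_p\geq 1$, and therefore the Euclidean distance satisfies $w(e_{p,q})\geq |x_q-x_p|\geq 1$. (This step is robust to the border-assignment convention, since it uses only that each cell is a unit square occupying one unit of horizontal extent.)

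Finally I would close the argument using the tree structure: in the subtree there is a unique path from $p$ to $q$, and by repeated application of the triangle inequality the total weight of this path is at least the straight-line distance between $p$ and $q$, which we just bounded below by $1$. Since this path is contained in the subtree, the subtree's weight is at least $1$ as well. I expect the only real care to be in the pigeonhole/bounding-box step — precisely formalizing ``five distinct cells cannot lie in a $2\times 2$ block'' and translating a span of cell-indices into a guaranteed horizontal (or vertical) separation of at least $1$ between two of the chosen points; everything afterward is routine. The bound is tight, achieved when the extremal points sit at horizontal distance exactly $1$.
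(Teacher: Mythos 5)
Your proof is correct and follows essentially the same route as the paper: both use the pigeonhole fact that five distinct cells cannot fit in a $2\times 2$ block to find two cells whose indices differ by at least $2$ in one coordinate, deduce that the corresponding points are at distance at least $1$, and conclude via the connecting path in the tree. Your write-up merely makes explicit the bounding-box and triangle-inequality steps that the paper leaves implicit.
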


\begin{proof}
Consider a subtree $T'$ of $T_{opt}$ with four edges.
Let $H$ denote the set of the five cells that contain vertices of $T'$. Note that there will be two cells in $H$ with coordinates $(i,j)$ and $(i',j')$ such that $|i-i'|\geq 2$ or $|j-j'|\geq 2$.  
Hence, Euclidean distance between a vertex from the cell $(i,j)$ and a vertex from the cell $(i',j')$ is a least $1$.
This implies $w(T') \ge 1$. See Fig.~\ref{fig:2} for an example.
\begin{figure}[h]
	\vspace{-10pt}
	\centering
	\includegraphics[scale=1.1]{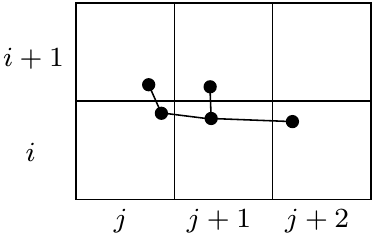}
	\vspace{-5pt}
	\caption{An example of a tree $T'$ with four edges}
	\label{fig:2}
	\vspace{-10pt}
\end{figure}
\qed\end{proof}

\begin{lemma}\label{lm:7}
The weight of any subtree of $T_{opt}$ with seven edges is at least $\frac{1}{3}(2\sqrt{6}+\sqrt{6-3\sqrt{3}})$ (which is greater than $1.93$).
\end{lemma}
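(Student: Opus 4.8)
The plan is to reduce the statement to a short computation about Euclidean Steiner minimal trees. Since $T_{opt}$ is a feasible GGMST solution, each of its vertices lies in a distinct non-empty cell, so a subtree $T'$ with seven edges has eight vertices sitting in eight distinct cells. Because $T'$ is a tree connecting these eight points, its weight is at least the weight of any subtree spanning a chosen subset of them, and in particular at least the length of the Steiner minimal tree $\mathrm{SMT}(a,b,c)$ of any three of the eight points: the remaining five vertices of $T'$ only serve as optional Steiner points, which can only shorten the connection. So it suffices to exhibit three of the eight cells whose representatives are forced to be pairwise far apart and to bound the Steiner tree of the resulting triangle from below.

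First I would localize the eight cells with a bounding-box argument in the spirit of Lemma~\ref{lm:4}. Let the minimal axis-aligned sub-grid containing the eight cells be $p\times q$ cells with $p\le q$, so that $pq\ge 8$. If $q\ge 4$, the leftmost and rightmost occupied cells differ by at least $3$ in column index, hence their two points differ by at least $2$ in the $x$-coordinate; since $T'$ is connected this forces $w(T')\ge 2>\sqrt{2+\sqrt3}$, and symmetrically if $p\ge 4$. The only surviving case is $p=q=3$, that is, the eight cells fill a $3\times 3$ block with exactly one cell omitted.

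In the $3\times 3$ case I would use the four corner cells of the block. As only one cell is omitted, at least three of these corners are present, and any three corners of a square form a right isosceles triangle. For two corner cells their coordinates differ by $2$ along one axis (giving minimum point distance $1$) or along both axes (giving minimum distance $\sqrt2$), so three present corners yield representatives $a,b,c$ whose pairwise distances are at least $1,1,\sqrt2$: they contain a unit right isosceles triangle. Thus $w(T')\ge \mathrm{SMT}(a,b,c)$, and it remains to show that the Steiner minimal tree of any triangle whose side lengths dominate $(1,1,\sqrt2)$ is at least as long as that of the unit right isosceles triangle, whose Steiner tree, realized through the Fermat point with the three segments meeting at $120^\circ$, has length $\sqrt{2+\sqrt3}$. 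A direct check gives $\sqrt{2+\sqrt3}=\tfrac12(\sqrt6+\sqrt2)=\tfrac13(2\sqrt6+\sqrt{6-3\sqrt3})>1.93$, matching the claimed constant, and the placement $a=(1,1)$, $b=(2,1)$, $c=(1,2)$ shows the bound is attained.

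The main obstacle is this last geometric step: proving that the Steiner minimal tree length is monotone under increasing the three pairwise distances, so that the minimum over all admissible positions of $a,b,c$ inside their cells is attained at the unit right isosceles configuration, together with the explicit evaluation of that Steiner tree via the Fermat point. This is precisely where the $120^\circ$ geometry, and hence the unusual constant $\sqrt{6-3\sqrt3}$, enters. Everything else is the elementary bounding-box bookkeeping and the observation that a seven-edge subtree necessarily meets eight distinct cells.
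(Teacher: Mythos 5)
Your overall strategy is sound and, up to its final step, runs parallel to the paper's: both arguments reduce the lemma to a three-terminal Fermat/Steiner computation. Your bounding-box reduction to the $3\times 3$ case, the observation that at least three of the four corner cells must be occupied, the lower bound $w(T')\ge \mathrm{SMT}(a,b,c)$ (the subtree of $T'$ spanning the three corner vertices is itself an admissible Steiner tree for them), and the identity $\sqrt{2+\sqrt{3}}=\tfrac{1}{2}(\sqrt{6}+\sqrt{2})=\tfrac{1}{3}(2\sqrt{6}+\sqrt{6-3\sqrt{3}})$ are all correct. The paper carries out the three-terminal bound differently: it takes the branch vertex $x$ of the tree paths from $v$ to $u$ and from $v$ to $y$ as a \emph{discrete} Steiner point, which necessarily lies in one of the nine cells, and does a case analysis on its location (if the two paths share only $v$, or if $x$ avoids the centre cell, the weight is already at least $2$; if $x$ is in the centre cell, $d(x,v)+d(x,u)+d(x,y)$ is minimized at the Fermat point of that cell's three relevant corners). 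Your continuous SMT bound elegantly subsumes the paper's first two cases.

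However, there is a genuine gap at exactly the step you flag as ``the main obstacle'': you never prove that the minimum of $\mathrm{SMT}(a,b,c)$ over all admissible positions of $a,b,c$ in their corner cells is attained at the configuration $(1,1),(2,1),(1,2)$; you only assert that it would follow from monotonicity of the triangle SMT under componentwise increase of the three side lengths. That monotonicity claim is true but is itself a nontrivial lemma, and the natural attempt to prove it (increase one side at a time) needs real care: an intermediate side-length vector may violate the triangle inequality, and one must handle the regime where an angle reaches $120^\circ$ and the SMT degenerates to the sum of two sides, so the claim cannot simply be cited as folklore. A cleaner way to close the gap within your framework is to avoid monotonicity altogether: since $\mathrm{SMT}(a,b,c)=\min_{x\in\R^2}\bigl(d(x,a)+d(x,b)+d(x,c)\bigr)$, exchanging the two minimizations gives
\begin{equation*}
\min_{a\in C_1,\,b\in C_2,\,c\in C_3}\mathrm{SMT}(a,b,c)=\min_{x\in\R^2}\bigl(d(x,C_1)+d(x,C_2)+d(x,C_3)\bigr),
\end{equation*}
where $C_1,C_2,C_3$ are the three corner cells and $d(x,C)$ is the point-to-cell distance. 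A short case analysis on the location of $x$ then finishes: if $x$ lies in the centre cell, the three nearest points of $C_1,C_2,C_3$ are exactly the corners $(1,1),(2,1),(1,2)$, so the sum is the Fermat sum of a unit right isosceles triangle, namely $\sqrt{2+\sqrt{3}}$; for $x$ anywhere else one checks the sum is at least $2$. This is, in effect, the paper's own case analysis transplanted to the continuous Steiner point, and it is the piece your write-up still owes.
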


\begin{proof}
Let $T'$ be a subtree of $T_{opt}$ with seven edges. If $T'$ does not fit in any $3 \times 3$ sub-grid of the original grid, then there are two 
vertices $u,v$ of $T'$ which are from cells with coordinates $(i,j)$ and $(i',j')$ such that $|i-i'| \ge
3$ or $|j-j'| \ge 3$. In that case $w(e_{u,v}) \ge 2$ and therefore
$w(T') \ge 2$. 

Next we consider the case when $T'$ fits into $3 \times 3$ sub-grid. Since $T'$ has eight vertices, at least three of them are in the corner cells of a $3\times 3$ grid. Without loss of generality we assume that these three vertices are vertex $v$ in cell $(i,j)$, vertex $u$ in cell $(i+2,j)$ and vertex $y$ in cell $(j+2,i)$. Let $P$ be a shortest path in $T'$ from $v$ to $u$ and let $Q$ be the shortest path in $T'$
from $v$ to $y$. Note that $w(e_{v,u}) \ge 1$ and $w(e_{v,y}) \ge 1$. If $P$ and $Q$ do not have a common vertex apart from $v$, then $w(T') \ge 2$.
Thus we are left with the case when $P$ and $Q$ have a common vertex other than $v$, which we denote by $x$.

First we assume that $P$ and $Q$ do not go through the point in cell $(i+1,j+1)$.
In this case, up to symmetry, one of the configurations depicted in Fig.~\ref{fig:7}~(a,b) occurs.
\begin{figure}[h]
	\vspace{-10pt}
	\centering
	\includegraphics[scale=.9]{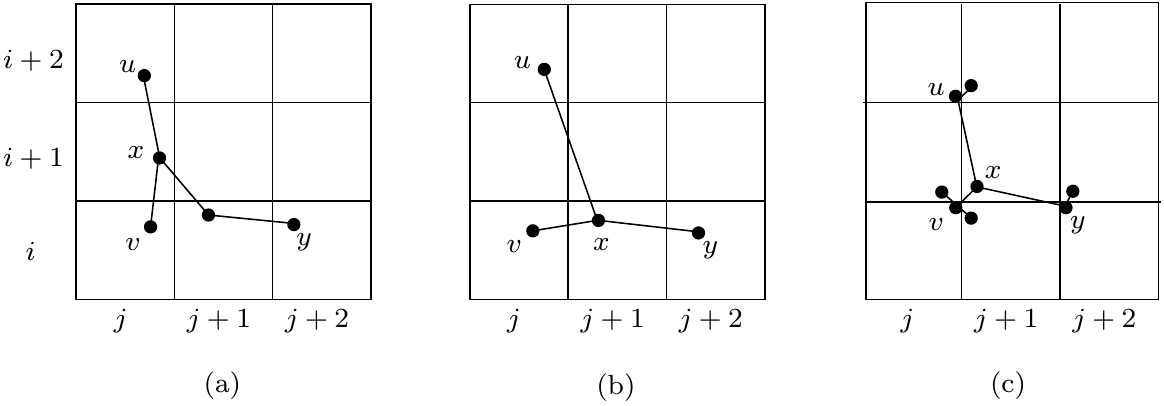}
	\vspace{-5pt}
	\caption{Layouts of $P$ and $Q$}
	\label{fig:7}
	\vspace{-10pt}
\end{figure}
However, it is clear that $w(e_{v,x})+w(e_{x,y})+w(e_{x,u}) \ge 2$ and hence $w(T') \ge 2$.

Lastly, we observe the case when vertex $x$ is in cell $(i+1,j+1)$. 
Then $w(P \cup Q)$ is at least $w(e_{x,v})+w(e_{x,u})+w(e_{x,y})$, which is minimized when $x$ is the Fermat point for the three corners of cell $(i+1,j+1)$ and $T'$ has the structure depicted in Fig.~\ref{fig:7}~(c). Therefore it can be computed that $w(T')\geq\frac{1}{3}(2\sqrt{6}+\sqrt{6-3\sqrt{3}})>1.93$.
\qed\end{proof}

\begin{lemma}\label{lm:8}
	The weight of any subtree of $T_{opt}$ with eight edges is at least $2$.
\end{lemma}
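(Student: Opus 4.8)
The plan is to mirror the case analysis used in the proof of Lemma~\ref{lm:7}, splitting on how large a sub-grid the tree occupies. First observe that a subtree $T'$ of $T_{opt}$ with eight edges has nine vertices, and since $T_{opt}$ contains exactly one vertex per non-empty cell, these nine vertices lie in nine \emph{distinct} cells. I would first dispose of the case where these cells do not fit into any $3\times 3$ sub-grid: then there are two cells with coordinates $(i,j)$ and $(i',j')$ such that $|i-i'|\ge 3$ or $|j-j'|\ge 3$, so the corresponding two vertices $u,v$ satisfy $w(e_{u,v})\ge 2$. Since $T'$ is connected it contains a path between $u$ and $v$, and by the triangle inequality $w(T')\ge w(e_{u,v})\ge 2$, as desired.

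The remaining case is when the nine distinct cells fit into a $3\times 3$ sub-grid; since such a sub-grid has exactly nine cells, they must occupy \emph{all} nine of them. I would then focus on the four corner cells and their vertices: let $A,B,C,D$ be the points in the bottom-left, bottom-right, top-left, and top-right corner cells, respectively. The cell positions immediately give the pairwise lower bounds $w(e_{A,B}),w(e_{C,D}),w(e_{A,C}),w(e_{B,D})\ge 1$ (the two points differ by at least $1$ in one coordinate) and $w(e_{A,D}),w(e_{B,C})\ge \sqrt{2}$ (they differ by at least $1$ in both coordinates). The projection argument alone only yields $w(T')\ge\sqrt2$, so something sharper is needed here; this is the crux of the lemma.

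To upgrade $\sqrt2$ to $2$, I would pass through the minimal subtree $S\subseteq T'$ that spans $\{A,B,C,D\}$, noting $w(T')\ge w(S)$. Doubling every edge of $S$ yields an Eulerian closed walk of length $2w(S)$ visiting all four corners; shortcutting it (triangle inequality) to a Hamiltonian cycle on $\{A,B,C,D\}$ shows that the optimal TSP tour on these four points has length at most $2w(S)$. It then suffices to check that every Hamiltonian cycle on the four corners has length at least $4$: the three cyclic orders give perimeters bounded below by $4$ (the ``rectangle'' order $A,B,D,C$ uses four side-edges each $\ge 1$) and by $2+2\sqrt2>4$ (the two orders using the two diagonals each $\ge\sqrt2$ and two sides each $\ge 1$). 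Hence the TSP tour is at least $4$, giving $2w(S)\ge 4$, so $w(T')\ge w(S)\ge 2$. I expect the doubling-to-TSP step together with verifying the three cyclic-order bounds to be the main point requiring care, while the sub-grid dichotomy is routine.
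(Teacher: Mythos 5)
Your proposal is correct and follows the same structure as the paper's proof: the same dichotomy on whether $T'$ fits in a $3\times 3$ sub-grid, the same observation that nine vertices in nine distinct cells must occupy all cells of such a sub-grid, and hence all four corner cells. The only difference is that the paper dismisses the final step with ``from which easily follows that $w(T')>2$,'' whereas you make it rigorous via the double-the-tree-and-shortcut argument reducing to the three Hamiltonian cycles on the corner points --- a valid (and welcome) completion of the step the paper leaves to the reader.
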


\begin{proof}
Let $T'$ be a subtree of $T_{opt}$ with eight edges. If $T'$ does not fit in any $3 \times 3$ sub-grid then by the same simple argument as in the proof of Lemma~\ref{lm:7} we get $w(T')\geq 2$. If $T'$ fits in a $3 \times 3$ sub-grid, then there is one vertex of $T'$ in any cell of such $3\times 3$ grid. More specifically, there are vertices in cells $(i,j),$ $(i+2,j),$ $(i,j+2)$ and $(i+2,j+2)$  from which easily follows that $w(T') > 2$.
\hfill$\square$\end{proof}

\begin{lemma}\label{lm:9}
	The weight of any subtree of $T_{opt}$ with nine edges is at least $1+\sqrt{3}$.
\end{lemma}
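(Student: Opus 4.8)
The plan is to bound $w(T')$ through the bounding box of the cells containing the vertices of $T'$. Since $T_{opt}$ contains exactly one vertex per cell, the $10$ vertices of $T'$ lie in $10$ distinct cells, so they cannot fit into a $3\times 3$ sub-grid. Writing $\Delta x$ and $\Delta y$ for the horizontal and vertical extents of the vertex set, I will use two elementary projection bounds: first, projecting the (connected) tree onto either axis gives $w(T')\ge \max(\Delta x,\Delta y)$; second, since $|e|\ge (|e_x|+|e_y|)/\sqrt2$ for every edge $e$, summing yields $w(T')\ge \tfrac{1}{\sqrt2}(\Delta x+\Delta y)$. If the cells span $\ell$ columns then $\Delta x\ge \ell-2$, and similarly for rows.

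With these in hand, most cases are immediate. If the cells span at least $5$ columns or rows, then $w(T')\ge 3>1+\sqrt3$. Otherwise, since $10$ distinct cells do not fit in a $3\times3$ and each side is at most $4$, the bounding box is, up to symmetry, $3\times 4$ or $4\times 4$. In the $4\times 4$ case both extents are at least $2$, so the second projection bound gives $w(T')\ge \tfrac{1}{\sqrt2}(2+2)=2\sqrt2>1+\sqrt3$, and that case is done.

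The only remaining, and genuinely harder, case is the $3\times 4$ box (and its transpose), where $\Delta x\ge 2$ but only $\Delta y\ge 1$, so the projection bounds give merely $\tfrac{3}{\sqrt2}\approx 2.12$, short of $1+\sqrt3\approx 2.73$. Here I would argue via a Steiner-type estimate in the spirit of Lemma~\ref{lm:7}. The cells occupy all four columns and both the top and the bottom row, so I can select three vertices: two realizing the horizontal extent $\ge 2$ and one realizing the vertical extent $\ge 1$ relative to them. The minimal subtree of $T'$ joining three vertices has weight at least the Steiner minimal tree of the triangle they span, and this quantity is minimized, over all admissible positions, by the isosceles triangle with base $2$ and height $1$. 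A direct computation of its Steiner (Fermat) tree gives length $\sqrt{4+2\sqrt3}=1+\sqrt3$, which is exactly the target bound and shows that it is tight.

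I expect this $3\times 4$ case to be the main obstacle. Making the choice of the three vertices rigorous requires care: one must ensure that some vertex lies at perpendicular distance at least $1$ from the line through the two horizontally extreme vertices, handling the sub-cases where the extreme points coincide, are nearly collinear, or where the top/bottom witness sits above a corner rather than centrally. One must also confirm that the centered isosceles triangle truly minimizes the Steiner length under the constraints base $\ge 2$, height $\ge 1$; the off-center and larger configurations only increase it (for instance the right triangle with legs $2$ and $1$ gives $\sqrt{5+2\sqrt3}>1+\sqrt3$), so the symmetric configuration is the worst case, and the transpose $4\times 3$ box is handled identically.
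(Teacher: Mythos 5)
Your reduction to bounding boxes and the two easy cases are correct, and structurally they coincide with the paper's proof: a span of at least $5$ in some direction gives $w(T')\ge 3$, and the $4\times 4$ box gives $2\sqrt{2}$. In fact your per-edge inequality $|e|\ge(|e_x|+|e_y|)/\sqrt{2}$, summed over the tree, is a cleaner route to the $2\sqrt{2}$ bound than the paper's argument, which projects the union of two paths joining vertices on opposite sides of the box. Your Steiner computation is also right: for the centered triangle with base $b$ and height $h$ the Steiner length is $\frac{\sqrt{3}}{2}b+h$, which is increasing in both parameters, equals $1+\sqrt{3}$ at $b=2,h=1$, and only grows for off-center or obtuse configurations (where one falls back on $\sqrt{b^2+4h^2}\ge 2\sqrt{2}$ or a two-edge tree). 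So the target value and its tightness are correctly identified.

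The genuine gap is in the $3\times 4$ case, and it is not a matter of "care": the triple you propose simply need not exist. Take the box with rows $0,1,2$ and columns $0,\dots,3$, let the two horizontally extreme vertices sit in middle-row cells at height $y=1.5$, say at $(0.99,1.5)$ and $(3.01,1.5)$, and place the vertex of every bottom-row cell in the upper half of its cell ($y>0.5$) and the vertex of every top-row cell in the lower half of its cell ($y<2.5$). This is a legitimate configuration with one vertex per cell and bounding box $3\times 4$, yet every vertex lies at perpendicular distance strictly less than $1$ from the line through the two extremes, so no admissible apex exists and your Steiner bound cannot be invoked (the lemma still holds for such configurations, but by a different mechanism: the points split into well-separated clusters and any tree through them is long). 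The paper closes this case with a different anchor: since at most $2$ of the $12$ cells of the box are empty, at least two of the four \emph{corner cells} contain vertices of $T'$; it then distinguishes the three relative positions of two occupied corners (same short side, same long side, diagonal) and in each case bounds $w(T')$ from below by a Steiner tree whose terminals are suitably chosen corners of cells, the tight case being two corners on a long side plus a witness vertex in the middle of the opposite column --- that witness is guaranteed to exist precisely because the two corner cells of that column may be assumed empty (else an earlier case applies), which uses up the quota of empty cells. If you want to repair your selection rule, it must be anchored to occupied corner cells of the box, not to the horizontally extreme vertices.
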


\begin{proof}
Let $T'$ be a subtree of $T_{opt}$ with nine edges. If $T'$ does not fit in any $4 \times 4$ sub-grid of the original grid, then there are two 
vertices $u,v$ of $T'$ which are in cells with coordinates $(i,j)$ and $(i',j')$ such that $|i-i'| \ge
4$ or $|j-j'| \ge 4$. In that case $w(e_{u,v}) \ge 3$ and therefore
$w(T') \ge 3>1+\sqrt{3}$. 

Next we consider the case when the smallest rectangular sub-grid that contains $T'$ is of the size $4\times 4$, and let $(i,j)$ be the bottom left corner cell  of such $4\times 4$ grid.
In that case there are four (not necessarily distinct) vertices $u,v,x,y$ of $T'$ that for some $i\leq i',i''\leq i+3$ and $j\leq j',j''\leq j+3$ lie in cells $(i',j),(i,j'),(i'',j+3),(i+3,j'')$, respectively. 
Let $P$ be the shortest path in $T'$ from $u$ to $x$ and let $Q$ be the shortest path in $T'$ from $v$ to $y$.
Let us observe the union of paths $P$ and $Q$. This union is a set of $k$ edges we denote by $e_\ell$, $\ell=1,\ldots,k$. Let us denote by $x_\ell$ and $y_\ell$ the lengths of projections of $e_\ell$ on $x$-axis and $y$-axis, respectively. Then 
\begin{equation}\label{eq:proj}
w(P\cup Q)=\sum_{\ell=1}^k \sqrt{x_\ell^2+y_\ell^2}.
\end{equation}
Since distance between projections of $u$ and $x$ on $x$-axis is at least 2 and distance between projections of $v$ and $y$ on $y$-axis is at least 2, it follows that 
$\sum_{\ell=1}^k x_\ell\geq 2$
and
$\sum_{\ell=1}^k y_\ell\geq 2$.
Hence, \eqref{eq:proj} is minimized when $k=1$ and $x_1=y_1=2$ with minimal value being $2\sqrt{2}$. Therefore we get $w(T')\geq 2\sqrt{2}> 1+\sqrt{3}$.

Lastly, we consider the case when $T'$ fits into a rectangular sub-grid $R$ of dimensions smaller than $4\times 4$. Without loss of generality we can assume that $R$ is of the size $4\times 3$, and let $(i,j)$ be the bottom left corner cell of $R$. Note that there are at least two vertices of $T'$ that are in corner cells of $R$. Without loss of generality we assume that vertex $v$ is in cell $(i,j)$. Next we distinguish remaining cases with respect to the position of the second corner point which we denote by $u$.

{\bf Case 1.} Vertex $u$ is in cell $(i,j+2)$. As there are ten vertices in $T'$, one of them must be in cell $(i+3,j')$ for some $j\leq j'\leq j+3$. Denote such vertex by $y$. By calculating the Fermat point $x$ it can be seen that weight of the Steiner tree containing $u,v$ and $y$ is at least $2+\sqrt{3}/2$ which is greater than $1+\sqrt{3}$, see Fig.~\ref{fig:9}~(a).
\begin{figure}[h]
	\vspace{-10pt}
	\centering
	\includegraphics[scale=.9]{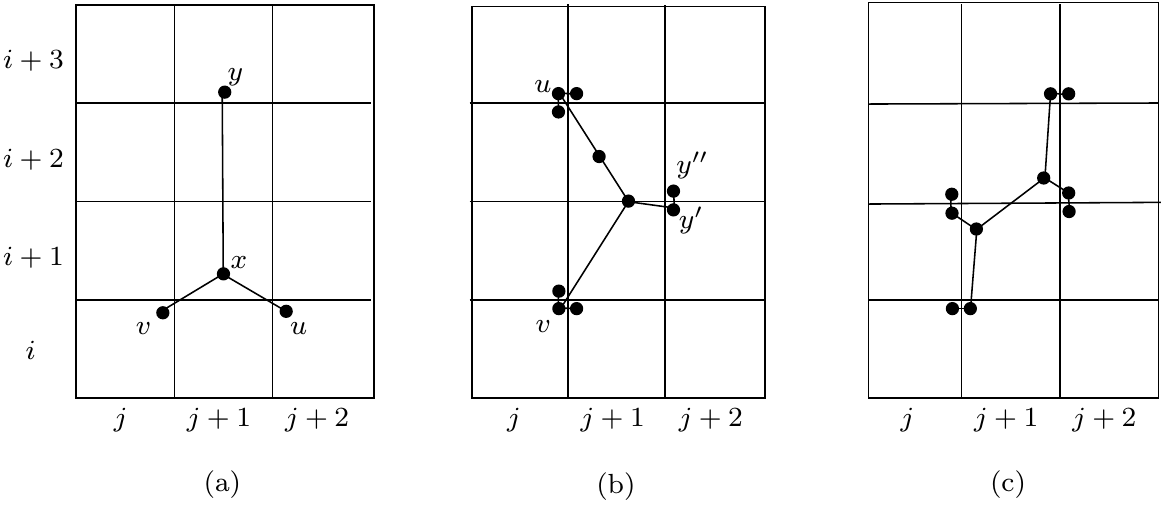}
	\vspace{-10pt}
	\caption{$T'$ configurations cases}
	\label{fig:9}
	\vspace{-10pt}
\end{figure}

{\bf Case 2.} Vertex $u$ is in cell $(i+3,j)$. We can assume that there are no vertices of $T'$ in cells $(i,j+2)$ or $(i+3,j+2)$ as then {\bf Case 1} applies. Then there must be vertices $y', y''$ in $T'$ in cells $(i+1,j+2)$ and $(i+2,j+2)$. Hence, $w(T')$ must be at least as the weight of the Steiner tree that contains right upper corner of cell $(i,j)$, right bottom corner of cell $(i+3,j)$ and left bottom corner of cell $(i+2,j+2)$. By calculating the Fermat point, one can see that such Steiner tree has weight $1+\sqrt{3}$, hence $w(T')\geq 1+\sqrt{3}$. In Fig.~\ref{fig:9}~(b) subtree $T'$ has the configuration that mimics such Steiner tree. 

{\bf Case 3.} Vertex $u$ is in cell $(i+3,j+2)$. We can assume that there are no vertices of $T'$ in cells $(i,j+2)$ or $(i+3,j)$ as then {\bf Case 1} or {\bf Case 2} apply. In this case minimal weight $T'$ mimics the Steiner tree that contains right upper corner of cell $(i,j)$, left bottom corner of cell $(i+3,j+2)$, right bottom corner of  cell $(i+2,j)$ and left upper corner of the cell $(i+1,j+2)$, see Fig.~\ref{fig:9}~(c). It is easy to calculate that the weight of such Steiner tree is $\sqrt{5+2\sqrt{3}}$ which is greater than $1+\sqrt{3}$.
\qed\end{proof}

Now we are ready to prove Theorem~\ref{thm:lb}.

\begin{proof}[of Theorem~\ref{thm:lb}]
We will proof the theorem by induction on $N$. Recall that $N$ is the number of edges in $T_{opt}$.

By Lemma~\ref{lm:4}, Lemma~\ref{lm:8} and Lemma~\ref{lm:9}, theorem holds for $N\leq 13$. Next we assume that theorem holds for all trees with number of edges strictly less than $N$.

We will perform the induction step as follows: through exhaustive case study we will show that there always exist a subtree $T'$ of $T_{opt}$ for which $w(T')$ is greater or equal to number of edges of $T'$ divided by $4$, and if we remove from $T_{opt}$ the edges of $T'$, it remains connected. In that case, by induction hypothesis the bound for $T_{opt}$ holds.

We observe $T_{opt}$ as a rooted tree, and given a vertex $v$ of $T_{opt}$, we denote by $T_v$  the maximal subtree of $T_{opt}$ rooted at $v$. 

Let $u$ be a non-leaf vertex of $T_{opt}$ with maximum number of edges in its path to the root. 

{\bf Assumption 1:} We may assume $u$ has at most two children. Namely, in the case when $u$ has four children $u_1,u_2,u_3,u_4$ let $T'$ be a subtree of $T_u$ induced by $\{u,u_1,u_2,u_3,u_4\}$. In the case when $u$ has exactly three children $u_1,u_2,u_3$ set $T'$ to be $T_v$ where $v$ is the parent of $u$. Note that in both cases $T'$ has four edges. Let $T''=T_{opt}\setminus E(T')$ where $E(T)$ denotes the set of edges of a tree $T$. Since $T''$ is a tree, by induction hypothesis it follows that $|E(T'')|=N-4 \le 4w(T'')+3$. Furthermore, by Lemma~\ref{lm:4} we have that $4 \le 4w(T')$. Hence, $N \le 4w(T'')+4w(T')+3=4w(T_{opt})+3$. 

{\bf Assumption 2:}
If $u$ has exactly two children $u_1,u_2$, we may assume that the parent of $u$ (denoted by $v$) has degree strictly greater than two. Namely, if this is not the case, we set $T'=T_v\cup\{e_{v,w}\}$ where $w$ is the parent of $v$, and we set $T''=T_{opt}\setminus E(T_w)$.
Since $T'$ has four edges and $T''$ is a tree, by induction hypothesis for $T''$ and Lemma \ref{lm:4} we obtain the bound.

{\bf Case 1:} Vertex $u$ has exactly two children $u_1,u_2$. Then by {\bf Assumption~2}
$v$ has at least two children.
By the choice of $u$, the number of edges in any path from $v$ to a leaf in $T_v$ is at most $2$.
Let $w'$ be another child of $v$. By {\bf Assumption~1} $w'$ has at most two children.
Also note that we can assume that $w'$ has at least one child. Otherwise the subtree $T'$
induced by $\{w',v,u,u_1,u_2\}$ has four edges, hence by removing the edges of $T'$ from $T$ we can
apply the induction hypothesis and obtain the bound. 

{\bf Case 1.1:} Vertex $v$ has another child $w''$. In this case using
the same arguments as above it can be shown that  $w''$ must have exactly one or two children.
Note that subtree $T'$ induced by $v,u,u_1,u_2$ together with $T_{w'},T_{w''}$ has
at least seven edges and at most nine edges. Therefore,
Lemma~\ref{lm:7}, Lemma~\ref{lm:8} or Lemma~\ref{lm:9} can be applied for each of the cases. Furthermore, for the remaining subtree $T_{opt} \setminus E(T')$ the induction hypothesis can be applied to obtain the bound.

{\bf Case 1.2:} Vertex $v$ has only two children $w',u$. Let $w$ be the parent of $v$.
We can assume that $w'$ has exactly one child, otherwise the
subtree $T'$ induced by the vertices of $T_v$ and vertex $w$ has exactly seven edges, hence we could use Lemma~\ref{lm:7}. 
If the degree of $w$ is two, then let $T'$ be the
subtree induced by $T_{w}$ together with the edge $e_{w,y}$, where $y$ is the parent of $w$.
$T'$ has seven edges and therefore, the result follows. 
Now, we may assume that $w$ has another child $v'$. Let $T_1=T_v$ and observe that $T_1$
has $5$ edges. Let $T_2=T_{v'}$. By the same argument used for $T_{v}$, we
conclude that $T_2$ has at most five edges. Let $T'=T_1 \cup T_2 \cup \{e_{w,v'},e_{w,v}\}$. 
If $T_2$ has zero, one or two edges, then $T'$ has at least seven and at most nine
edges, and hence the bound follows. If $T_2$ has four edges then by induction
hypothesis on $T_{opt}\setminus E(T_2)$ and by applying the Lemma \ref{lm:4} on $T_2$, we obtain the bound. 
It remains to consider the cases when $T_2$ has three or five edges. If $T_2$ has three edges, then
we add edge $e_{w,v'}$ to $T_2$ and now the new tree has four edges, hence we can apply the same arguments as before.  
We are left only with the case when $T_2$ has five edges. 
In this case $w(T_2) \ge 1$, according to Lemma \ref{lm:4}, and also $T_3=T_1 \cup \{e_{w,v'}, e_{v,w}\}$
has seven edges. By Lemma \ref{lm:7}, either $w(T_3)$ is at least $2$, or it has the structure depicted in Fig.~\ref{fig:7}~(c), and it is clear that every edge incident to the tree in Fig.~\ref{fig:7}~(c) is grater than, say $0.5$. Hence, in either case  $w(T') \ge 3$. Since $T'$ has twelve edges the bound is obtained by induction hypothesis on $T_{opt}\setminus E(T')$.

{\bf Case 2:} Vertex $u$ has exactly one child $u_1$.

{\bf Case 2.1} Vertex $v$ has another child $w'$. In this case $T_{w'}$ has depth at most $1$.
If $w'$ has more than one child, then from {\bf Case~1} ($w'$ instead of $u$) we are done.
If $w'$ has one child (denoted by $w_1$), then the subtree induced by $\{u_1,u,v,w',w_1\}$ has four edges and we are done.

We continue by assuming that $w'$ has no child. If $v$ has another child $w'' \notin\{u,w'\}$, then as we
argued for $w'$, we can assume that $w''$ has no child. However, in this case subtree induced by $\{u_1,u,v,w',w''\}$ has
four edges and we are done.
Therefore we can assume that $v$ has exactly two children $w'$ and $u$. Let $w$ be the parent of $v$. Then the subtree induced by $u_1,u,v,w',w$
has four edges and we are done.

{\bf Case 2.2:} Vertex $v$ has only child $u$. Let $w$ be the parent of $v$. W can assume that $v$ has a sibling node $v'$, as otherwise we can remove the four edge subtree induced by $\{u_1,u,v,w,z\}$, where $z$ is the parent of $w$. Furthermore, we can assume that $v'$ has a child $u'$, as otherwise we can remove the four edge subtree induced by $\{u_1,u,v,w,v'\}$.

{\bf Case 2.2.1:} Vertex $u'$ has no child but has a sibling $u''$. We can assume that no child of $v'$ has a child, as we can observe such case as an instance of {\bf Case 2.2.3}. Furthermore, we can assume that $u'$ and $u''$ are only children of $v'$. Otherwise, in the case when $v'$ has more than three children, there would exist a subtree of $T_{v'}$ with four edges that we could remove. Furthermore, in the case when $v'$ has exactly three children, we can remove $T'=T_{v'}\cup\{e_{w,v'}\}$. 

Hence we are left with the case when $u''$ is the only sibling of $u'$. In the case $v$ and $v'$ are only children of $w$, we can remove seven edge subtree $T'=T_w\cup\{e_{w,z}\}$, where $z$ denotes the parent of $w$. Lastly, we consider the case when there exist third child of $w$ denoted by $v''$. From the assumptions and solved cases above, we can assume that $T_{v''}$ has at most two edges, hence subtree $T'=T_v \cup T_{v'} \cup T_{v''} \cup \{e_{w,v}, e_{w,v'}, e_{w,v''}\}$ has seven, eight or nine edges, therefore we can remove it. 

{\bf Case 2.2.2:} Vertex $u'$ has no child nor sibling. In the case there exists a third child of $w$, from the assumptions and solved cases above if would follow that we can assume that it has only one child which has no child. In that case thee would exist a subtree of $T_w$ with four edges that we can remove. Hence, we can assume that $w$ has no other children besides $v$ and $v'$.  Then $T_w$ is a path with five edges. If $w(T_w)$ is grater than $5/4$, we can remove it and we are done. Otherwise it must be similar to the structure depicted in Fig.~\ref{fig:path5}, i.e.\@ with a path of approximate size $1$ alongside a border of a cell, and with remaining vertices grouped at the endpoints of such path. 
\begin{figure}[h]
	\vspace{-10pt}
	\centering
	\includegraphics[scale=1.0]{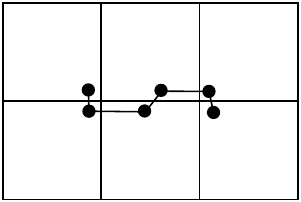}
	\caption{A short path with five edges}
	\label{fig:path5}
	\vspace{-10pt}
\end{figure}
Note that in that case, edge $e_{w,z}$ must be big enough so that $w(T_w\cup \{e_{w,z}\})$ is greater than $6/4$. Hence we can remove $T_w\cup \{e_{w,z}\}$ and by induction hypothesis obtain the bound. 

{\bf Case 2.2.3:} Vertex $u'$ has a child $u_1'$. Note that from the assumption on maximality of depth of $u$, $u_1'$ has no children. As we solved {\bf Case~2.1}, we can assume that $u_1'$ has no siblings. Furthermore, we can assume that there is no sibling of $u'$ that has a child, as in that case there would exist subtree of $T_{v'}$ with four edges that we could remove. Now in the case that $u'$ has more than one sibling, again, there would exist subtree of $T_{v'}$ with four edges that we could remove. In the case that $u'$ has exactly one sibling, subtree $T'=T_{v'}\cup\{e_{w,v'}\}$ can be removed.
We are left with the case when both $T_v$ and $T_{v'}$ are paths with two edges. In the case there is a third child of $w$, denoted by $v''$, from the solved cases above if follows that we can assume that $T_{v''}$ is also a path with two edges. In that case there is a  subtree of $T_w$ with nine edges that can be removed. In the case there is no third child of $w$, the seven edges subtree $T'=T_w\cup \{e_{w,z}\}$ (with $z$ being the parent of $w$), can be removed and the bound obtained.

We considered all the cases, therefore proving the theorem.
\qed\end{proof}


\vspace{-2mm} 

\section{Approximation of the GGTSP}\label{sec:tsp}

Our approximation algorithms for the GGMST can be used to obtain approximation algorithms for the geometric generalized travelling salesman problem on grid clusters (GGTSP) using standard methods.

We start with the approach of shortcutting a double MST, presented in Algorithm~\ref{alg:doubleMST} and analyzed next.

\vspace{-4mm}

\begin{algorithm}[h]
	\KwData{Instance $I$ of the GGTSP}
	\KwResult{Generalized travelling salesman tour}
	$T_A \leftarrow$ output of Algorithm~\ref{algEpsilon} on $I$\;
	$G_E\leftarrow$ Eulerian graph obtained by doubling all edges in $T_A$\;
 	$\mathcal{ET}\leftarrow$ an Euler tour of $G_E$\;
 	$\mathcal{C}\leftarrow$ a GGTSP tour obtained by going along $\mathcal{ET}$ and skipping repeated vertices\;
 	\Return $\mathcal{C}$;
 	\caption{$(2+8\sqrt{2}+2\epsilon)$-approximation algorithm for the GGTSP}
	\label{alg:doubleMST}
\end{algorithm}

\vspace{-4mm}

By removing one edge from a GGTSP tour, one obtains a GGMST tree, hence $w(T_A)$ is less than $(1+4\sqrt{2}+\epsilon)OPT$, where $OPT$ is the weight of an optimal solution of the GGTSP. Therefore, $w(G_E)$ is less than $2(1+4\sqrt{2}+\epsilon)OPT$. Due to triangle inequality, shorcutting the Euler tour in line 4 of the algorithm does not increase the weight. Hence, Algorithm~\ref{alg:doubleMST} is a  $(2+8\sqrt{2}+2\epsilon)$-approximation algorithm for the GGTSP. Note that $2+8\sqrt{2}$ is approximately equal to $13.31$.

Next we use the approach from the famous Christofides $\frac{3}{2}$-approximation algorithm for the metric TSP, see \cite{C76}. This approach will give us $0.5$ decrease of the 
approximation ratio. We give a sketch of the algorithm and the analysis, and leave details to the reader.

We start by running Algorithm~\ref{alg5} on the GGTSP instance. Let $T_G$ be the resulting tree. Note that $w(T_G)$ is less or equal than $(1+4\sqrt{2})OPT+2\sqrt{2}$, 
where $OPT$ is the weight of an optimal solution of the GGTSP. 
Let $S$ be a set of non-empty cells that contain a vertex of $T_G$ with an odd degree. Note that $|S|$ is even. Let $M$ be a minimum perfect 
matching among cells in $S$, where the distance between two cells $C_1,C_2\in S$ is the smallest distance between two points $p_1,p_2$ among all $p_1\in C_1$, $p_2\in C_2$. 
It is not hard to show that $w(M)\leq\frac{1}{2} OPT$. Let $M_G$ be the set of edges $e_{t_1,t_2}$ for which $t_1,t_2$ are vertices of $T_G$ and there exist an edge $e_{p_1,p_2}\in M$ 
such that $p_1$ and $t_1$ are in the same cell and  $p_2$ and $t_2$ are in the same cell. Note that $w(M_G)\leq \frac{1}{2} OPT+N\sqrt{2}$, and hence by 
Theorem~\ref{thm:lb} we get that $w(M_G)\leq \frac{1}{2} OPT+4\sqrt{2}OPT+3\sqrt{2}$. 
By merging $M_G$ and $T_G$ we obtain an Eulerian graph, and by shortcutting one of its Euler tours we obtain a GGTSP tour with weight at most 
$(\frac{3}{2}+8\sqrt{2})OPT+5\sqrt{2}$. By similar approach as in Algorithm~\ref{algEpsilon} and Theorem~\ref{thmEpsilon}, we can get rid of $5\sqrt{2}$ error, and 
obtain a $(\frac{3}{2}+8\sqrt{2}+\epsilon)$-approximation algorithm for every $\epsilon >0$. 

\vspace{-4mm} 

\section{Conclusions}

We presented a simple $(1+4\sqrt{2}+\epsilon)$-approximation algorithm for the geometric 
generalized minimum spanning tree problem on grid clusters (GGMST) and $(1.5+8\sqrt{2}+\epsilon)$-approximation 
algorithm for the geometric generalized travelling salesman problem on grid clusters (GGTSP). 

To obtain guarantied approximation ratios for our algorithms, we used the following lower bound on the optimal 
solution: Every tree with $N$ edges that contains at most one point from any $1\times 1$ grid cell is of size at least $\frac{N-3}{4}$. 
Obtaining a tight lower bound in terms of the number of edges would decrees guaranteed approximation ratios of our (and other similar) algorithms. 
Moreover, it would be an interesting result on its own.\\

{\bf Acknowledgment:} We would like to thank Geoffrey Exoo for many usefull discussions.

\vspace{-4mm}


\begin{thebibliography}{TMRM09}



\bibitem{AH94}
E. M. Arkin, R. Hassin, \emph{Approximation Algorithms for the Geometric Covering Salesman Problem}. Discrete Applied Mathematics 55(3): 197-218 (1994)


\bibitem{B83}
J. Bovet, \emph{The selective traveling salesman problem}, Papers presented at the
 EURO VI Conference, Vienna, 1983.

\bibitem{C76}
N. Christofides, \emph{Worst-case analysis of a new heuristic for the travelling salesman problem}. In Symposium on New
Directions and Recent Results in Algorithms and Complexity, J. F. Traub, ed. Academic Press, Orlando, Fla., p. 441 (1976).

\bibitem{DM01}
A. Dumitrescu, J. S. B. Mitchell, \emph{Approximation algorithms for TSP with neighborhoods in the plane}. In Symposium on Discrete Algorithms  38вЂ“46 (2001)


\bibitem{EFMS05}
K. M. Elbassioni, A. V. Fishkin, N. H. Mustafa, R. Sitters,
\emph{Approximation Algorithms for Euclidean Group TSP}. ICALP 2005: 1115-1126

\bibitem{F01}
C. Feremans, \emph{Generalized spanning trees and extensions}, Ph.D. Thesis,
University Libre de Bruxelles, 2001.

\bibitem{FGS06}
C. Feremans, A. Grigoriev, R. Sitters, \emph{ The geometric generalized minimum spanning tree problem with grid clustering}, 4OR, 4:319-329(2006).

\bibitem{GJ79}
M. R. Garey, D. S. Johnson, \emph{Computers and Intractability: A Guide to the Theory of NP-Completeness}. W. H. Freeman 1979.

\bibitem{GGJ76}
M. R. Garey, R. L. Graham, D. S. Johnson, \emph{Some NP-complete
geometric problems,} in Proc. 8th Annu. ACM Sympos. Theory Comput.,
1976, pp. 10вЂ“22.

\bibitem{GL00}
J. Gudmundsson, C. Levcopoulos, \emph{Hardness result for TSP with neighborhoods.}Technical Report LU-CS-TR:2000-216, Department of Computer Science, Lund Unversity, Sweden, (2000)

\bibitem{GRS05}
B. L. Golden, S. Raghavan, D. Stanojevic, \emph{Heuristic search for the generalized minimum spanning tree problem}. INFORMS Journal on Computing 17(3): 290-304 (2005).


\bibitem{JC10}
H. Jiang, Y. Chen, \emph{An efficient algorithm for generalized minimum spanning tree problem}, In: Proceedings of
Genetic and Evolutionary Computation Conference, 217-224 (2010).

\bibitem{L69}
A. L. Henry-Labordere, \emph{The record balancing problem: A dynamic programming solution
of a generalized traveling salesman problem}, RIBO B-2, 736-74 (1969).


\bibitem{L92}
G. Laporte, \emph{The traveling salesman problem: an overview of exact and approximate
algorithms}, European Journal of Operational Research, 59: 231-247 (1992).

\bibitem{LAS96}
G. Laporte, A. Asef-Vaziri, C. Srikandarajah, \emph{Some applications of the generalized
traveling salesman problem}, Journal of the Operational Research Society,  47:1461-1467 (1996).


\bibitem{MLT95}
Y-S. Myung, C-H. Lee, D-W. Tcha, \emph{On the generalized minimum spanning tree problem}. Networks 26(4): 231-241 (1995).

\bibitem{MM95}
C. S. Mata, J. S. B. Mitchell, \emph{Approximation algorithms for geometric tour and network design problems}, Proc. 11th Ann. ACM Symp. Comput. Geom., ACM, 360вЂ“369 (1995)

\bibitem{OCL08}
T. Oncan, J. F. Corseau, G. Laprte, \emph{A tabu search heuristic for the generalized minimum spanning tree
problem}, European Journal of Operational Research, 191:306-319 (2008).

\bibitem{P02}
P. C. Pop, \emph{The generalized minimum spanning tree problem}, Ph.D. Thesis,
University of Twente, 2002.

\bibitem{PSK05}
P. C. Pop, G. Still, W. Kern, \emph{An approximation algorithm for the generalized minimum spanning tree problem with bounded cluster size}. In: Proceedings of the first ACiD Workshop. Texts Algorithms 4, 115-121, (2005).

\bibitem{SS03}
S. Safra, O. Schwartz, \emph{On the Complexity of Approximating TSP with Neighborhoods and Related Problems}. ESA 2003: 446-458

\end{thebibliography}
\end{document}